\definecolor{darkgreen}{rgb}{0,.35,0}
\definecolor{darkblue}{rgb}{0,0,.35}
\definecolor{darkred}{rgb}{.35,0,0}
\newtheorem{Thm}{Theorem}
\newtheorem{Cor}[Thm]{Corollary}
\newtheorem{Lem}[Thm]{Lemma}
\newtheorem{Defn}[Thm]{Definition}
\newtheorem{Exam}[Thm]{Example}
\newtheorem{Fact}[Thm]{Fact}
\renewcommand{\O}{{\mathcal{O}}}
\newcommand{\COL}[1]{\mathcal{C}_g(#1)}
\newcommand{\softO}{{\O\mskip2mu\tilde{\,}\mskip1mu}}
\newcommand{\R}{{\mathcal{R}}}
\renewcommand{\th}{^{\text{th}}}
\newcommand{\fhat}{{\hat f}}
\newcommand{\M}{{\mathsf{M}}}
\newcommand{\ZZ}{{\mathbb{Z}}}
\newcommand{\CC}{{\mathbb{C}}}
\newcommand{\RR}{{\mathbb{R}}}
\newcommand{\Q}{{\mathcal{Q}}}
\newcommand{\SLP}{{\mathcal{S}}}
\title{Faster sparse interpolation of straight-line programs%
\thanks{A version of this paper appeared at CASC 2013,
\href{http://dx.doi.org/10.1007/978-3-319-02297-0_5}{\texttt{doi:10.1007/978-3-319-02297-0\_5}}}}
\author[1]{Andrew Arnold}
\author[1]{Mark Giesbrecht}
\affil[1]{Cheriton School of Computer Science\\
University of Waterloo}
\author[2]{Daniel S.\ Roche\thanks{Supported by NSF Award \#1319994}}
\affil[2]{Computer Science Department\\
United States Naval Academy}
\begin{document}

\maketitle

\begin{abstract}
  We give a new probabilistic algorithm for interpolating a ``sparse''
  polynomial $f$ given by a straight-line program.  Our algorithm
  constructs an approximation $f^*$ of $f$, such that $f-f^*$
  probably has at most half the number of terms of $f$, then
  recurses on the difference $f-f^*$.  Our approach builds on previous
  work by \cite{GarSch09}, and \cite{GieRoc11}, and is asymptotically
  more efficient in terms of the total cost of the probes required
  than previous methods, in many cases.
\end{abstract}

\section{Introduction}
\label{sec:intro}

We consider the problem of interpolating a sparse, univariate
polynomial 
\[
f = c_1 z^{e_1} + c_2 z^{e_2} + \cdots + c_t z^{e_t} \in\R[z]
\]
of degree $d$ with $t$ non-zero coefficients $c_1,\ldots,c_t$ (where
$t$ is called the \emph{sparsity} of $f$) over a ring
$\R$.  More formally, we are given a \emph{straight-line program} that
evaluates $f$ at any point, as well as bounds $D \geq d$ and $T \geq
t$.  The straight-line program is a simple but useful abstraction of a
computer program without branches, but our interpolation algorithm
will work in more common settings of ``black box'' sampling of $f$.

We summarize our final result as follows.

\begin{Thm}
  \label{thm:cost}
  Let $f \in \R[z]$, where $\R$ is any ring.  Given any straight-line
  program of length $L$ that computes $f$, and bounds $T$ and $D$ for
  the sparsity and degree of $f$, one can find all coefficients and
  exponents of $f$ using $\softO(L T\log^3 D + LT\log D\log
  (1/\mu))$\footnotemark[4] ring operations in $\R$, plus a similar
  number of bit operations.  The algorithm is probabilistic of the
  Monte Carlo type: it can generate random bits at unit cost and on
  any invocation returns the correct answer with probability greater
  than $1-\mu$, for a user-supplied tolerance $\mu>0$.
\end{Thm}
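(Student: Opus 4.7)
My plan would be to assemble the cost bound from three pieces: a single-pass ``halving'' subroutine, the recursion it drives, and a probability amplification step.

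First I would isolate the main subroutine, which I will call \textsc{Approximate}. Given the straight-line program for $f$, bounds $T$ and $D$, and a target failure probability $\varepsilon$, it outputs a candidate sparse polynomial $f^*$ whose correct-term overlap with $f$ is at least $\lceil t/2\rceil$ with probability $\ge 1-\varepsilon$. This is the core object delivered by the \citet{GarSch09}/\citet{GieRoc11} machinery: a Prony/geometric-progression probe scheme plus discrepancy checks, applied to probes obtained by evaluating the SLP. I would show that a constant-success-probability instance of \textsc{Approximate} uses $\softO(T\log^3 D)$ probes and arithmetic in $\R$, so that when each probe is charged the $O(L)$ cost of running the SLP, the total cost is $\softO(LT\log^3 D)$ ring operations (plus the stated bit complexity for exponent/coefficient management).

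Second, I would wrap \textsc{Approximate} in the recursive ``subtract and recurse'' loop sketched in the abstract. Because the SLP for $f-f^*$ is obtained by appending $O(T\log D)$ instructions to the SLP for $f$, the length parameter grows by only a constant factor at each level when $f^*$ is kept of the right size, and the sparsity bound on the residue halves with constant probability. Iterating this $O(\log T)$ times drives the sparsity to zero, and the per-level work forms a geometric series dominated by the first level, giving the leading $\softO(LT\log^3 D)$ term.

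Third, I would handle the confidence parameter $\mu$. Rather than setting $\varepsilon$ tiny inside each call (which would blow up the polylog factor), I would repeat each halving step $O(\log(T/\mu))$ times, keeping the attempt whose residual SLP checks smallest on $O(\log(1/\mu))$ random points. Each verification probe still costs $O(L)$ ring operations, and since we only need to distinguish the residual from zero, each verification round needs only $\softO(T\log D)$ arithmetic; summing over the $O(\log T)$ levels and $O(\log(1/\mu))$ repetitions gives the second term $\softO(LT\log D\log(1/\mu))$. A union bound across all levels and repetitions bounds the overall failure probability by $\mu$.

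The step I expect to be delicate is the probability analysis for \textsc{Approximate}: proving that a constant fraction of the terms of $f$ are recovered exactly, rather than just that \emph{some} terms are correct, is what guarantees true halving and hence the logarithmic recursion depth. The cost accounting and the SLP-length bookkeeping across the recursion are routine once that guarantee is in hand.
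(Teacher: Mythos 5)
Your high-level outline — isolate a ``halving'' subroutine, recurse $O(\log T)$ times, and account for the confidence parameter separately — does match the paper's structure at a birds-eye level. But the core technical content that makes the cost bound hold is missing from your proposal, and one of your ingredients would not work.

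The missing piece is the paper's notion of an \emph{ok prime}. You assert that the Garg--Schost/Giesbrecht--Roche machinery already delivers a halving subroutine with probes of total cost $\softO(LT\log^3 D)$, but it does not. That machinery relies on a \emph{good} prime $p$, one for which \emph{no} two exponents of $g$ collide modulo $p$, and such a prime must be taken of size $\Theta(T^2\log D)$. With probes of that degree you get cost $\softO(LT^2\log^3 D)$, which is exactly the Giesbrecht--Roche bound you are trying to beat. The new idea in this paper is to relax to a prime that separates merely \emph{most} of the terms: Lemma~\ref{lem:lambda} shows a random prime $p \in [\lambda, 2\lambda]$ with $\lambda = \Theta\!\left(\frac{T^2\log D}{\gamma}\right)$ has fewer than $\gamma$ colliding terms with probability $>\tfrac12$, and Lemma~\ref{lem:chooseMostTerms} with Corollary~\ref{cor:choose} show that one can detect such a prime (up to a factor of $2$ in the collision count) by simply picking the prime whose image has the most terms. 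Setting $\gamma = \Theta(T)$ brings $\lambda$ down to $\Theta(T\log D)$, hence probe degree $pq_i \in O(T\log^2 D)$ and the claimed $\softO(LT\log^3 D)$ per level. Without the ok-prime relaxation, the $T$ factor in your bound would instead be $T^2$. This is precisely the ``delicate step'' you flag at the end, and it is not ``delivered by'' prior work — it is the contribution of the paper.

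Two further discrepancies are worth naming. First, you cite a ``Prony/geometric-progression probe scheme,'' but the paper explicitly rules out Prony's method in the SLP model over a general ring $\R$, since evaluating at a symbolic $D$th root of unity costs $\Omega(D)$ ring operations; all probes here are reductions modulo $z^{\ell}-1$, with $\ell$ small. Second, your amplification scheme — repeating each level and voting by ``smallest residual on random points'' — is not how the paper proceeds and is not clearly sound over an arbitrary ring where ``smallest'' is not defined and zero-testing cannot be done by a handful of point evaluations. The paper instead amplifies inside the per-level prime search: it samples $\lceil \log(1/\varepsilon)\rceil$ primes, keeps the one yielding the most terms, sets $\varepsilon = \mu/(\log T + 1)$, and union-bounds over the $O(\log T)$ levels. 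That is what produces the $\softO(LT\log D \log(1/\mu))$ term directly, rather than via a separate verification loop.

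As a smaller point, the paper does not append instructions to $\SLP_f$ to compute $g = f - f^*$; it keeps $f^*$ explicitly sparse and reduces its exponents modulo $\ell$ on the side, so there is no SLP-length bookkeeping to do. Your observation that $L$ grows by at most a constant factor if you did append is plausible but unnecessary, and fragile if $L \ll T\log D$.
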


\footnotetext[4]{For summary convenience we use soft-Oh notation: for
  functions $\phi,\psi\in\RR_{>0}\to\RR_{>0}$ we say
  $\phi\in\softO(\psi)$ if and only if $\phi\in
  \O(\psi(\log\psi)^c)$ for some constant $c\geq 0$.}


\subsection{The straight-line program model and interpolation}
\label{ssec:slpinterp}

Straight-line programs are a useful model of computation, both as a
theoretical construct and from a more practical point of view; see,
e.g., \citep[Chapter 4]{BCS97}.  Our interpolation algorithms work more
generally for $N$-variate sparse polynomials $f\in\R[z_1,\ldots,z_N]$
given by a straight-line program $\SLP_f$ defined as follows.
$\SLP_f$ takes an input $(a_1, \dots, a_N) \in
\R^N$ of length $N$, and produces a vector $b \in \R^L$ via a series
of $L$ instructions $\Gamma_i : 1 \leq i \leq L$ of the form
\[ 
\Gamma_i = 
\begin{cases}
  \gamma_i & \longleftarrow \alpha_1 \star \alpha_2, ~~\emph{or} \\
  \gamma_i & \longleftarrow \delta \in \R ~~~~\emph{(i.e., a constant
    from $\R$)},
\end{cases} 
\]
where $\star$ is a ring operation `$+$', `$-$', or `$\times$',
and either $\alpha_\ell \in \{ a_j \}_{1 \leq j \leq n}$ or 
$\alpha_\ell \in \{ \gamma_k \}_{1 \leq k < i}$ for $\ell=1,2$. 
When we say $\SLP_f$ \emph{computes}
$f$, we mean $\SLP_f$ sets $\gamma_L$ to $f( a_1, \dots, a_N)\in\R$.

To interpolate an $N$-variate polynomial $f\in\R[z_1,\ldots,z_N]$, we
apply a Kronecker substitution, and interpolate
\[
\fhat(z) = f\left(z, z^{(D+1)}, z^{(D+1)^2}, \dots,
  z^{(D+1)^{N-1}} \right)\in\R[z].
\]
While this certainly increases the degree, $f$ and $\fhat$ have the
same number of non-zero terms, and $f$ can be easily recovered from
$\fhat$.  This reduces the problem of interpolating the $N$-variate
polynomial $f$ of partial degree at most $D$ to interpolating a
univariate polynomial $\fhat$ of degree at most $(D+1)^N$. For the
remainder of this paper we thus assume $f$ is univariate.

It will also be necessary to evaluate our polynomial $f\in\R[z]$, or
rather our straight-line program $\SLP_f$ for $f$, in an extension ring of
$\R$.  Precisely, we want to evaluate $f$ at symbolic $\ell$th roots
of unity for various choices of $\ell$, or algebraically, in
$\R[z]/(z^{\ell}-1)$.  This may be regarded as transforming our
straight-line program by substituting operations in $\R$ with
operations in $\R[z]/(z^\ell-1)$, where each element is represented by
a polynomial in $\R[z]$ of degree less than $\ell$. Each instruction
$\Upsilon_i$ in the transformed branching program now potentially
requires $\M( \ell)$ operations in $\R$, where $\M(\ell)$ is the
number of operations in $\R$ and bit operations needed to multiply two
degree-$\ell$ polynomials over the base ring $\R$. By \cite{CanKal91},
we may assume $\M(\ell) = \O(\ell \log \ell \log\log \ell)$.

Each evaluation of our straight-line program for $f$ in
$\R[z]/(z^\ell-1)$ is called a \emph{probe of degree $\ell$}.  Thus,
the cost of a degree-$\ell$ probe to $\SLP_f$ is $\softO( L \ell )$ operations
in $\R$, and similarly many bit operations.

This is easily connected to the more ``classical'' view of sparse
interpolation, in which probes are simply evaluations of a ``black-box'' 
polynomial at a single point (and we do not have any
representation for how $f$ is calculated). Each probe in the
straight-line program model can
be thought of as evaluating $f$ at \emph{all} $\ell$th roots of unity
in the classical model.  Since we charge $\M(\ell)=\softO(\ell)$
operations in $\R$ for a degree $\ell$ probe in the straight-line
program model, i.e., about $\ell$ times
as much as a single black-box probe, this is consistent with the costs in a
classical model.  We note that algorithms for sparse interpolation
presented below could be stated in this classical model, though we
find the straight-line program model convenient and will continue with
it throughout this paper.


\subsection{Previous work}
\label{ssec:prev}

Straight-line programs, or equivalently algebraic circuits,
are important both as a computational model and as a data structure
for polynomial computation. Their rich history includes both
algorithmic advances and practical implementations
\citep{Kal89,StuZha90,BruHei02}.

One can naively interpolate a polynomial $f\in\R[z]$ given by a straight-line 
program
using a dense method, with $D$ probes of degree $1$. Prony's
\citeyearpar{Prony95} interpolation algorithm --- see \citep{BenTiw88,
  KalLak90, GieLab09} --- is a sparse interpolation method that uses
evaluations at only $2T$ powers of a root of unity whose order is
greater than $D$. However, in the straight-line program model for a
general ring, this would require evaluating at a symbolic $D$th root
of unity, which would use at least $\Omega(D)$ ring operations and
defeat the benefit of sparsity.  Problems with Prony's algorithm are
also seen in the classical model in that the underlying base ring $\R$
must also support an efficient discrete logarithm algorithm on entries
of high multiplicative order (which, for example, is not feasible
over large finite fields).

We mention two algorithms specifically intended for straight-line
programs.

\subsubsection{The Garg-Schost deterministic algorithm.}

\cite{GarSch09} describe a novel deterministic algorithm for interpolating a
multivariate polynomial $f$ given by a straight-line program.  Their
algorithm entails constructing an integer symmetric polynomial with
roots at the exponents of $f$:
\[
  \chi = \prod_{i=1}^{t}(y-e_i) \in\ZZ[y],
\]
which is then factored to obtain the exponents $e_i$.

Their algorithm first finds a \emph{good} prime: a prime $p$ for which
the terms of $f$ remain distinct when reduced modulo $z^p-1$. We call
such an image $f \bmod (z^p-1)$ a \emph{good image}. Such an image
gives us the values $e_i \bmod p$ and hence $\chi(y) \bmod p$.

\begin{Exam}
  For $f = z^{33}+z^3$, $5$ is not a good prime because $f \bmod
  (z^5-1) = 2z^3$.  We say $z^{33}$ and $z^{3}$ {\em collide} modulo
  $z^5-1$.  $7$ is a good prime, as the image $f(z) \bmod (z^7-1) =
  z^5+z^3$ has as many terms as $f(z)$ does.
\end{Exam}

In order to guarantee that we have a good prime, the algorithm
requires that we construct the images $f \bmod (z^p-1)$ for the first
$N$ primes, where $N$ is roughly $\softO( T^2\log D)$. A good prime
will be a prime $p$ for which the image $f \bmod (z^p-1)$ has
maximally many terms, which will be exactly $t$. Once we know we have
a good image we can discard the images $f \bmod (z^q-1)$ for
\emph{bad} primes $q$, i.e. images with fewer than $t$ terms. We use
the remaining images to construct $\chi(y) = \prod_{i=1}^{t}(y-e_i)
\in \mathbb{Z}[y]$ by way of Chinese remaindering on the images
$\chi(y) \bmod p$.

We factor $\chi(y)$ to obtain the exponents $e_i$, after which we
directly obtain the corresponding coefficients $c_i$ directly from a
good image.

The algorithm of \cite{GarSch09} can be made faster, albeit Monte Carlo,
using the following number-theoretic fact.

\begin{Fact}[\citealp{GieRoc11}]
  Let $f \in \mathcal{R}[z]$ be a polynomial with at most $T$ terms
  and degree at most $D$.  Let $\lambda = \max(21, \lceil
  \tfrac{5}{3}T(T-1)\log D \rceil )$.  A prime $p$ chosen at random in
  the range $[ \lambda, 2\lambda]$ is a good prime for $f(z)$ with
  probability at least $\tfrac{1}{2}$.
\end{Fact}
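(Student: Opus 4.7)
The plan is to characterize bad primes as divisors of the product of pairwise exponent differences, bound the count of such divisors in $[\lambda,2\lambda]$ by a simple log--product estimate, and then compare to an explicit Chebyshev-style lower bound on the number of primes in that range.

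First, I would reformulate ``good''. A prime $p$ is good for $f$ exactly when all residues $e_i \bmod p$ are distinct, which fails iff some pair of distinct exponents satisfies $e_i \equiv e_j \pmod{p}$, i.e.\ $p \mid (e_i - e_j)$. Equivalently, $p$ is bad iff $p$ divides
\[
E \;:=\; \prod_{1 \leq i < j \leq t}(e_i - e_j).
\]
Since the $t \leq T$ exponents are distinct integers in $[0,D]$, each factor is a nonzero integer of absolute value at most $D$, so $0 < |E| \leq D^{t(t-1)/2} \leq D^{T(T-1)/2}$.

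Next I would bound the number of bad primes in the sampling range. If $p_1 < p_2 < \cdots < p_k$ are distinct primes in $[\lambda, 2\lambda]$ each dividing $E$, then $\lambda^k \leq p_1 p_2 \cdots p_k \leq |E|$, whence $k \leq \tfrac{T(T-1)\log D}{2\log\lambda}$. The hypothesis $\lambda \geq \tfrac{5}{3}T(T-1)\log D$ gives $T(T-1)\log D \leq \tfrac{3\lambda}{5}$, and so the number of bad primes in the range is at most $\tfrac{3\lambda}{10\log\lambda}$.

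Finally, I would invoke an explicit Rosser--Schoenfeld-style estimate to establish $\pi(2\lambda) - \pi(\lambda) \geq \tfrac{3\lambda}{5\log\lambda}$ for all $\lambda \geq 21$; this is exactly why the definition of $\lambda$ takes a maximum with $21$. Dividing the two bounds, the probability that a uniformly random prime in $[\lambda,2\lambda]$ is bad is at most $\tfrac{3\lambda/(10\log\lambda)}{3\lambda/(5\log\lambda)} = \tfrac{1}{2}$, as claimed. The main obstacle is verifying the explicit inequality $\pi(2\lambda) - \pi(\lambda) \geq \tfrac{3\lambda}{5\log\lambda}$ uniformly for $\lambda \geq 21$: it follows asymptotically from standard explicit forms of the prime number theorem, but the small-$\lambda$ range near the threshold may require a direct finite check to pin down the constant $\tfrac{3}{5}$ that matches the $\tfrac{5}{3}$ in the definition of $\lambda$.
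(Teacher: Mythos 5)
Your proof is correct and follows the same technique the paper itself uses to prove Lemma~\ref{lem:lambda}, its quantitative generalization of this Fact (which the paper only cites from \citealp{GieRoc11} rather than reproving): bad primes must divide the product of pairwise exponent differences, their count in $[\lambda,2\lambda]$ is bounded by taking logarithms of $\lambda^{\#B}\leq|E|$, and this is compared against the lower bound of $3\lambda/(5\ln\lambda)$ on the number of primes in $[\lambda,2\lambda]$ for $\lambda\geq 21$, which is precisely the explicit estimate you flagged as the remaining obstacle and is supplied by Corollary~3 to Theorem~2 of \citealp{RosSch62}. One detail worth noting: your use of the unordered product $\prod_{i<j}(e_i-e_j)\leq D^{T(T-1)/2}$ is exactly what yields the tight constant $5/3$, whereas the paper's Lemma~\ref{lem:lambda} needs prime-power multiplicities $p^{\COL{p}}$ to count colliding terms and therefore bounds by the ordered product $D^{T(T-1)}$, which is why it ends up with the looser constant $10/(3\gamma)$ rather than $5/(3\gamma)$.
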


Thus, in order to find a good image with probability at least
$1-\varepsilon$, we can inspect images $f \bmod (z^p-1)$ for $\lceil
\log 1/ \varepsilon \rceil$ primes $p$ chosen at random in $[ \lambda,
2\lambda ]$. As the height of $\chi(y)$ can be roughly as large as
$D^T$, we still require some $\mathcal{O}^{\sim}(T \log D)$ probes to
construct $\chi(y)$.

\subsubsection{The ``diversified'' interpolation algorithm.}

\cite{GieRoc11} obtain better performance by way of
\emph{diversification}. A polynomial $f$ is said to be \emph{diverse}
if its coefficients $c_i$ are pairwise distinct. The authors show
that, for $f$ over a finite field or $\CC$ and for appropriate random
choices of $\alpha$, $f(\alpha z)$ is diverse with probability at
least $\tfrac{1}{2}$. They then try to interpolate the diversified
polynomial $f(\alpha z)$.

Once we have $t$ with high probability, we look at images $f(\alpha z)
\bmod (z^p-1)$ for primes $p$ in $[\lambda, 2\lambda]$, discarding bad
images. As $f(\alpha z)$ is diverse, we can recognize which terms in
different good images are images of the same term. Thus, as all the
$e_i$ are at most $D$, we can get all the exponents $e_i$ by looking
at some $\softO(\log D)$ good images of $f$.

\subsection{Deterministic zero testing}

Both the Monte Carlo algorithms of \cite{GarSch09} and \cite{GieRoc11}
can be made Las Vegas (i.e., no possibility of erroneous output, but unbounded 
worst-case running time) by way of
deterministic zero-testing. Given a polynomial $f$ represented by a
straight-line program, each of these algorithms produces a polynomial
$f^*$ that is probably $f$.

\begin{Fact}[\cite{BlaHar09}; Lemma 13]
  Let $\R$ be an integral domain, and suppose $f = f^* \bmod (z^p-1)$ for
  $T\log D$ primes.  Then $f = f^*$.
\end{Fact}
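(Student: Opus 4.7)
The plan is a contrapositive argument: let $g := f - f^* \in \R[z]$, assume $g \neq 0$, and show that the set of primes $p$ for which $g \equiv 0 \pmod{z^p - 1}$ is too small to contain $T\log D$ distinct primes, obtaining a contradiction.

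First, I would write $g$ in sparse form as $g = \sum_{i=1}^s c_i z^{e_i}$ with $s \geq 1$, pairwise distinct exponents $e_i \in \{0,1,\ldots,D\}$, and nonzero $c_i \in \R$. Since $f$ and $f^*$ are each expected to have sparsity controlled by $T$ (and in the algorithms under discussion, $f - f^*$ has substantially smaller support than $2T$), $s$ is bounded in terms of $T$.

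The central observation is that for any prime $p$, the reduction modulo $z^p - 1$ rewrites as $g \bmod (z^p-1) = \sum_i c_i z^{e_i \bmod p}$ before collection of like terms. If the residues $e_i \bmod p$ happened to be pairwise distinct, then the right-hand side would be a polynomial of degree $< p$ with all $c_i$ nonzero; because $\R$ is an integral domain, no coefficient can spontaneously vanish, so the image would be nonzero. Contrapositively, every prime $p$ for which $g \equiv 0 \pmod{z^p-1}$ must produce at least one collision $e_i \equiv e_j \pmod p$ with $i \neq j$, and hence $p$ must divide the nonzero integer $e_i - e_j$, whose absolute value is at most $D$.

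The final step is a counting argument. The integer $\Pi := \prod_{i<j}(e_i - e_j)$ is nonzero with $|\Pi| \leq D^{\binom{s}{2}}$, so $\Pi$ has at most $\binom{s}{2}\log_2 D$ distinct prime divisors. Every ``vanishing'' prime must lie in this set, so the total count of such primes is at most $\binom{s}{2}\log_2 D$. Under the sparsity regime guaranteed by the calling algorithms this quantity is strictly less than $T\log D$, contradicting the hypothesis; therefore $g = 0$, that is, $f = f^*$. The main point to check carefully is the bookkeeping connecting the sparsity bound on $g$ to the claimed $T\log D$ threshold — the reduction and prime-counting arguments are otherwise routine.
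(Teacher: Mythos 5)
The paper cites this Fact from the external reference \citep{BlaHar09} (their Lemma 13) and does not reprove it, so there is no in-paper proof to compare your argument against; I will evaluate it on its own terms.

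Your overall strategy is the right one: work with $g := f - f^*$, observe that over an integral domain a vanishing image $g \equiv 0 \pmod{z^p-1}$ forces at least one collision $e_i \equiv e_j \pmod p$ among the support of $g$, and then bound the number of such primes by a divisibility argument on differences of exponents. The flaw is in the final counting step. You take $\Pi := \prod_{i<j}(e_i - e_j)$, which gives a bound of $\binom{s}{2}\log_2 D$ on the number of bad primes. But the sparsity $s$ of $g = f - f^*$ can be as large as $2T$, so this bound is on the order of $T^2\log D$, which is \emph{never} smaller than the hypothesized $T\log D$. The contradiction you are after cannot be reached from this inequality, so the issue is not just the ``bookkeeping'' you flag at the end --- the quadratic-in-$s$ estimate is structurally too weak.

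The missing idea is to anchor the divisibility argument at a single fixed exponent rather than at all pairs. If $g \equiv 0 \pmod{z^p-1}$, then \emph{every} nonzero term of $g$ must cancel; in particular the term $c_1 z^{e_1}$ must collide with some other term, so there is a $j \neq 1$ with $p \mid (e_1 - e_j)$. Thus every bad prime divides the single nonzero integer $\prod_{j\neq 1}(e_1 - e_j)$, of absolute value at most $D^{s-1}$, and since distinct primes are each at least $2$ there are at most $(s-1)\log_2 D$ of them. This linear-in-$s$ bound is what lands in the $O(T\log D)$ regime that the Fact asserts and that the surrounding text of the paper relies on (with $s \le 2T$ when $f$ and $f^*$ are each $T$-sparse, matching the statement up to the constant factor the paper is implicitly absorbing). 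Everything else in your argument --- the reduction to $g$, the integral-domain observation, the bound $|e_i-e_j|\le D$ --- is sound.
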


Thus, testing the correctness of the output of a Monte Carlo algorithm
requires some $\softO( T\log D )$ probes of degree at most
$\softO(T\log D)$. This cost does not dominate the cost of either
Monte Carlo algorithm. We note that this deterministic zero test can
dominate the cost of the interpolation algorithm presented in this
paper if $T$ is asymptotically dominated by $\log D$.


\subsection{Summary of results}

We state as a theorem the number and degree of probes required by our
new algorithm presented in this paper.

\begin{Thm}\label{thm:probe}
  Let $f \in \R[z]$, where $\R$ is a ring.  Given a straight-line
  program for $f$, one can find all coefficients and exponents of $f$
  with probability at least $1-\mu$ using $\softO\left( \log T(\log D
    + \log \tfrac{1}{\mu}) \right)$ probes of degree at most $\O( T
  \log^2 D )$.
\end{Thm}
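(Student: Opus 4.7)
The plan is to prove Theorem~\ref{thm:probe} by exploiting a halving recursion on the sparsity. I maintain an SLP computing a residual polynomial $g$, initially $g = f$, together with a bound $t$ on its current sparsity (initially $t = T$). At each round I construct an approximation $f^*$ that, with high probability, shares at least $\lceil t/2 \rceil$ terms (matching both coefficient and exponent) with $g$; then I replace $g$ by $g - f^*$, which is implemented by appending to the existing SLP an evaluation of the sparse $f^*$ and a subtraction, adding only $\O(t)$ instructions. I then recurse with the new sparsity bound $\lfloor t/2 \rfloor$. After at most $\lceil \log_2 T \rceil$ rounds the residual is identically zero, and the algorithm returns the accumulated sum of the approximations.

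The engine at each round is a subroutine built as follows. First, diversify by substituting $g(\alpha z)$ for a random $\alpha \in \R$, making the $t$ coefficients pairwise distinct with constant probability. Second, pick a prime range $[\lambda,2\lambda]$ with $\lambda = \Theta(t \log D)$; for a random prime $p$ in this range, any \emph{fixed} term of $g$ avoids colliding with every other term of $g$ in the image $g \bmod (z^p-1)$ with constant probability, since the $t-1$ pairwise exponent differences each rule out at most $\O(\log D)$ primes of size $\lambda$. Third, take $N = \O(\log D + \log(1/\mu'))$ random primes in this range and perform the corresponding probes: by a Chernoff bound over the $N$ probes followed by a union bound over the $t$ terms, with failure probability at most $\mu'$, at least half of the terms of $g$ remain uncollided in at least half of the probes. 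Fourth, use the coefficient diversity to recognize, across the probes, which residues $(p,\, e \bmod p)$ come from a common term of $g$, and use CRT on $\O(\log D)$ matching residues to reconstruct the true exponent, which lies in $[0,D]$. Terms whose reconstructions are consistent across sufficiently many probes are added to $f^*$; the remainder are discarded.

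Setting $\mu' = \mu / \lceil \log_2 T \rceil$ and taking a union bound over the $\lceil \log_2 T \rceil$ rounds gives total failure probability at most $\mu$, as required. Each round uses $\O(\log D + \log(1/\mu))$ probes of degree at most $2\lambda$ inflated by an $\O(\log D)$ factor for the CRT and consistency-checking margin, i.e.\ degree $\O(T \log^2 D)$; over all rounds this totals $\softO(\log T (\log D + \log(1/\mu)))$ probes. The main obstacle is the third step: boosting the per-term constant collision-avoidance probability into a simultaneous $\geq t/2$ success count with confidence $1-\mu'$ while keeping $\lambda$ linear rather than quadratic in $t$, since this relaxation of the good-prime requirement is exactly what trades the original $\O(T^2 \log D)$ degree bound for the claimed $\O(T \log^2 D)$ bound, and it is the reason that recursing on $f - f^*$ is asymptotically preferable to interpolating all terms of $f$ at once.
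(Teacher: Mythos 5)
Your high-level strategy is the same as the paper's: maintain an approximation $f^*$, recurse on the residual $g = f - f^*$ with a halving sparsity bound over $\lceil\log_2 T\rceil$ rounds, and replace the ``good prime'' requirement (no collisions) with a relaxed requirement (few collisions) so that $\lambda$ can be taken linear rather than quadratic in $T$. Your per-term collision estimate with $\lambda = \Theta(t\log D)$ is also sound. But the engine you propose inside each round is genuinely different from the paper's, and the difference is where the gap lies.

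The paper fixes a single ``ok'' prime $p = \O(T\log D)$ and then probes modulo $z^{pq_i}-1$ for a \emph{fixed} family of pairwise-coprime moduli $q_i = \O(\log D)$ with $\prod q_i > D$. Because every probe has the same factor $p$, a term that is non-colliding mod $p$ can be tracked across all probes simply by its residue class mod $p$: the residue mod $p$ is a free handle, and CRT on the residues mod $q_i$ recovers the exponent. No assumption on the coefficients is needed, which is why the theorem applies over an arbitrary ring $\R$. You instead probe modulo $z^{p_j}-1$ for $N$ \emph{independent} random primes $p_j$, and propose to match residues across probes via coefficient diversity after substituting $g(\alpha z)$. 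Two problems arise. First, GR-style diversity only guarantees the $t$ products $c_i\alpha^{e_i}$ are pairwise distinct; in a probe where some terms \emph{do} collide (which is the whole point of using ok rather than good primes), a collision class contributes a coefficient that is a \emph{sum} $\sum c_{i'}\alpha^{e_{i'}}$, and nothing in pairwise distinctness prevents such a sum from coinciding with some surviving $c_i\alpha^{e_i}$. Then the coefficient-based matching is ambiguous and CRT can be fed residues from different terms. Making diversity strong enough to exclude all such subset-sum coincidences across all $N$ probes and all $\log T$ rounds (including against coefficients of deceptive terms injected into the residual by earlier rounds) requires a separate, and much heavier, probabilistic argument that you do not supply. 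Second, diversification itself needs $\R$ to supply enough random elements (GR restrict to finite fields or $\CC$), whereas the theorem you are asked to prove allows any ring $\R$; the paper's $(p, pq_i)$ probe design is precisely how it avoids this restriction.

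Two smaller points. The ``$\O(\log D)$ inflation for CRT and consistency-checking margin'' in your degree bound is not derived from anything in your construction: your probes have degree $\le 2\lambda = \O(T\log D)$, which happens to sit inside the stated $\O(T\log^2 D)$, but the explanation as written is a post-hoc rationalization; the paper's $T\log^2 D$ arises concretely as $p\cdot q_i$. And your last step --- ``terms whose reconstructions are consistent across sufficiently many probes are added to $f^*$; the remainder are discarded'' --- glosses over the deceptive-term issue: the paper bounds the number of deceptive terms by $\tfrac23\gamma$ (those coming from collision classes of size $\ge 3$) and folds that into the choice $\gamma = \tfrac{3}{16}T_g$ so that the residual genuinely halves. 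Your proposal never establishes that the residual, \emph{including} spurious terms admitted by a vague consistency check, has at most $\lfloor t/2\rfloor$ terms, which is exactly the invariant the recursion needs.
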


\begin{table}
  \caption{A ``soft-Oh'' comparison of interpolation algorithms for straight-line programs} 
  \label{tab:comp}
  \setlength{\tabcolsep}{5pt}
  \centering
  \begin{tabular}{l | r|r|r|r}
    & Probes & Probe degree & Cost of probes & Type \\
    \hline
    Dense & $D$ & $1$ & $LD$ & deterministic \\\hline
    Garg \& Schost & $T^2\log D$ & $T^2\log D$ & $LT^3\log^2 D$ & deterministic \\\hline
    *Las Vegas G \& S & $T\log D$ & $T^2\log D$ & $LT^3\log^2 D$ & Las Vegas \\\hline
    *Diversified & $\log D$ & $T^2\log D$ & $LT^2\log^2 D$ & Las Vegas \\\hline
    $\dagger$Recursive & $\log T\log D$ & $T \log^2 D$ & $LT\log^3 D$ & Monte Carlo \\\hline
    \multicolumn{5}{l}{ \rule{0pt}{12pt}\footnotesize 
      *Average \# of probes given; $\dagger$ for a fixed probability of failure $\mu$ }
\end{tabular}
\end{table}

Table \ref{tab:comp} gives a rough comparison of known algorithms.
Our recursive algorithm improves by a factor of $T/\log D$
over the Giesbrecht-Roche diversification algorithm --- ignoring
``soft'' multiplicative factors of $(\log(T/\log D))^{O(1)}$ --- and as such is
better suited for moderate values of $T$.  Our algorithm recursively interpolates
a series of polynomials of decreasing sparsity.  An advantage of this
method is that, when we cross a threshold where $\log D$ begins to dominate
$T$, we can merely call the Monte Carlo diversification algorithm instead.


\section{A recursive algorithm for interpolating $f$}

Entering each recursive step in our algorithm we have our polynomial
$f$ represented by a straight-line program, and an explicit sparse
polynomial $f^*$ ``approximating'' $f$, that is, whose terms mostly appear in 
the sparse representation of $f$.
At each recursive step we try to interpolate the difference $g=f-f^*$.
To begin with, $f^*$ is initialized to zero.

We first find an ``ok'' prime $p$ which separates most of the terms of
$g$.  We then use that prime $p$ to build a approximation $f^{**}$,
containing most of the terms of $g$, plus possibly some additional
``deceptive'' terms.  The polynomial $f^{**}$ is constructed such that
$g=f-f^*$ has, with high probability, at most $T/2$ terms.  We then
recursively interpolate the difference $g-f^{**}$.  

Producing images $f^* \bmod (z^\ell-1)$ is straightforward, we merely
reduce the exponents of terms of $f^*$ modulo $\ell$.  We assume $g$
has a sparsity bound $T_g \leq T$.

\subsection{A weaker notion of ``good" primes}\label{sec:ok}

To interpolate a polynomial $g$, the sparse interpolation algorithm
described by \cite{GieRoc11} requires a \emph{good} prime $p$ which
keeps the exponents of $g$ distinct modulo $p$.  That is, $g \bmod
(z^p-1)$ has the same number of terms as $g$.  We define a weaker notion
of a good prime, an \emph{ok prime}, which separates most of the
terms of $g$.  To that end we measure, for fixed $g$ and prime $p$,
how well $p$ separates the terms of $g$.

\begin{Defn}
  Fix a polynomial $g = \sum_{i=1}^t c_iz^{e_i}\in\R[z]$ with non-zero
  $c_1,\ldots,c_t\in\R$, where $e_i < e_j$ for $i<j$, we say
  $c_iz^{e_i}$ and $c_jz^{e_j}$, $i \neq j$, \emph{collide} modulo
  $z^p-1$ if $e_i\equiv e_j \bmod p$.  We call any term $c_iz^{e_i}$ of $f$
  which collides with any other term of $f$ a \emph{colliding term} of $f$ 
  modulo
  $z^p-1$.  We let $\COL{p} \in [0, t]$ denote the number of
  colliding terms of $g$ modulo $z^p-1$.
\end{Defn}

\begin{Exam}
  For the polynomial $g= 1+z^5+z^7+z^{10}$, $\COL{2} = 4$, since $1$
  collides with $z^{10}$ and $z^5$ collides with $z^7$ modulo $z^2-1$.
  Similarly, $\COL{5} = 2$, since $z^5$ collides with $z^{10}$ modulo
  $z^5-1$.
\end{Exam}

We say $c_iz^{e_i}$ and $c_jz^{e_j}$ collide modulo $z^p-1$ because
both terms have the same exponent once reduced modulo $z^p-1$.  All
other terms of $g$ we will call {\em non-colliding} terms modulo
$z^p-1$.

In the sparse interpolation algorithm of \cite{GieRoc11}, one chooses
a $\lambda \in \mathbb{Z}_{>0}$ such that the probability of a prime
$p \in [ \lambda, 2\lambda ]$, chosen at random and having $\COL{p}=0$,
is at least $\tfrac{1}{2}$.  However, in order to guarantee that we
find such a prime with high probability, we need to choose $\lambda
\in \O( T^2\log D)$.

In this paper we will search over a range of smaller primes, while
allowing for a reasonable number of collisions.  We try to pick
$\lambda$ such that
\begin{equation*}
  \Pr\left( \COL{p} \geq \gamma \text{ for a random prime } p 
    \in [\lambda, 2\lambda] \right) < 1/2,
\end{equation*}
for a parameter $\gamma$ to be determined.

\begin{Lem}\label{lem:lambda}
  Let $g \in \R[z]$ be a polynomial with $t \leq T$ terms and degree
  at most $d \leq D$.  Suppose we are given $T$ and $D$, and let
  $\lambda = \max\left(21, \left\lceil \tfrac{10T(T-1)\ln(D)}{3
        \gamma}\right\rceil\right)$.  Let $p$ be a prime chosen at
  random in the range $\lambda, \dots, 2\lambda$.  Then $\COL{p} \geq
  \gamma$ with probability less than $\tfrac{1}{2}$.
\end{Lem}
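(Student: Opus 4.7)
The plan is to bound the expected number of colliding terms over a random prime $p \in [\lambda, 2\lambda]$ and then apply Markov's inequality. First I would pass from colliding terms to colliding \emph{pairs}: let $K(p) = |\{(i,j) : 1 \le i < j \le t,\ e_i \equiv e_j \pmod{p}\}|$. If a residue class modulo $p$ contains $k \ge 2$ exponents of $g$, it contributes $k$ to $\COL{p}$ and $\binom{k}{2}$ to $K(p)$; since $k \le k(k-1) = 2\binom{k}{2}$ whenever $k \ge 2$, summing over classes gives $\COL{p} \le 2K(p)$.

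Next I would estimate $\Pr[p \mid (e_j - e_i)]$ for a fixed pair $i<j$. Since $0 < |e_j - e_i| \le D$, the integer $e_j - e_i$ has at most $\lfloor \log_\lambda D \rfloor \le (\ln D)/(\ln \lambda)$ distinct prime divisors of magnitude at least $\lambda$. A Chebyshev-type estimate (valid for $\lambda \ge 21$, which is exactly the role of the $\max$ with $21$ in the definition of $\lambda$) guarantees at least $\tfrac{3\lambda}{5\ln\lambda}$ primes in $[\lambda, 2\lambda]$. Hence
\[
  \Pr_{p}\bigl[\,p \mid (e_j - e_i)\,\bigr] \;\le\; \frac{\ln D / \ln \lambda}{3\lambda/(5\ln\lambda)} \;=\; \frac{5\ln D}{3\lambda}.
\]

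Summing over the at most $\binom{t}{2} \le \binom{T}{2}$ pairs and using linearity of expectation,
\[
  \mathbb{E}[K(p)] \;\le\; \binom{T}{2}\cdot \frac{5\ln D}{3\lambda} \;=\; \frac{5\,T(T-1)\ln D}{6\lambda},
\]
and therefore $\mathbb{E}[\COL{p}] \le 2\,\mathbb{E}[K(p)] \le \tfrac{5T(T-1)\ln D}{3\lambda}$. Markov's inequality gives
\[
  \Pr\bigl[\COL{p} \ge \gamma\bigr] \;\le\; \frac{\mathbb{E}[\COL{p}]}{\gamma} \;\le\; \frac{5\,T(T-1)\ln D}{3\lambda\gamma},
\]
and this is strictly less than $\tfrac12$ precisely when $\lambda > \tfrac{10\,T(T-1)\ln D}{3\gamma}$, which is ensured by the hypothesized choice of $\lambda$.

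The only real obstacles are bookkeeping: (i) being careful that a residue class of size $k$ contributes $k$, not $2k$, to $\COL{p}$, which forces the $k \le 2\binom{k}{2}$ slack used above; and (ii) citing the correct explicit Chebyshev-style lower bound on $\pi(2\lambda) - \pi(\lambda)$ valid for $\lambda \ge 21$, which is where the constant $3/5$ (and hence the $10/3$ in the bound on $\lambda$) enters. Everything else is Markov plus divisor counting.
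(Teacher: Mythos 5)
Your proof is correct and reaches the paper's bound by a genuinely different route. The paper counts bad primes directly: it observes that each prime $p$ with $\COL{p}\ge\gamma$ contributes a factor $p^{\gamma}\ge\lambda^{\gamma}$ to the product $\prod_{i\ne j}(e_i-e_j)\le D^{T(T-1)}$, which immediately caps the number $\#B$ of bad primes at $T(T-1)\ln D/(\gamma\ln\lambda)$, and then compares against the Rosser--Schoenfeld count of primes in $[\lambda,2\lambda]$. You instead pass to colliding pairs, bound $\Pr_p[p\mid(e_j-e_i)]$ for each pair individually, apply linearity of expectation, and finish with Markov. Both arguments hinge on the same two ingredients --- each nonzero difference $e_j-e_i$ has at most $\ln D/\ln\lambda$ prime divisors in $[\lambda,2\lambda]$, and there are more than $3\lambda/(5\ln\lambda)$ primes in that window --- and, pleasantly, both land on exactly the threshold $\lambda>\tfrac{10}{3}T(T-1)\ln D/\gamma$, so no constant is lost. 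The paper's multiplicative double-count is arguably tighter bookkeeping (it never has to mention expectation), while your Markov version is more modular and generalizes more readily if one wants tail bounds beyond $1/2$. Your reduction $\COL{p}\le 2K(p)$ via $k\le 2\binom{k}{2}$ is exactly the slack the paper uses implicitly when it asserts $p^{\COL{p}}\mid\prod_{i\ne j}(e_i-e_j)$.

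One small nit: the lemma asserts a \emph{strict} bound $\Pr[\COL{p}\ge\gamma]<\tfrac12$, but the ceiling in the definition of $\lambda$ only guarantees $\lambda\ge\tfrac{10}{3}T(T-1)\ln D/\gamma$, not strict inequality, so your final ``precisely when $\lambda>\cdots$'' line does not by itself close the gap. The slack you actually need is already in the Rosser--Schoenfeld estimate, which is a strict inequality $\pi(2\lambda)-\pi(\lambda)>3\lambda/(5\ln\lambda)$; propagating that strictness through $\Pr_p[p\mid(e_j-e_i)]<\tfrac{5\ln D}{3\lambda}$ (and handling the trivial case $t<2$ separately) restores the strict $<\tfrac12$. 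This is the same place the paper's proof gets its strictness.
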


\begin{proof}
  The proof follows similarly to the proof of Lemma 2.1 in
  \citep{GieRoc11}.

  Let $B$ be the set of unfavourable primes for which $\COL{p} \geq
  \gamma$ terms collide modulo $z^p-1$, and denote the size of $B$ by
  $\#B$.  As every colliding term collides with at least one other
  term modulo $z^p-1$, we know $p^{\COL{p}}$ divides $\prod_{1 \leq i
    \neq j \leq t}(e_i - e_j)$.  Thus, as $\COL{p} \geq \gamma$ for
  $p \in B$,

  \[
  \lambda^{\#B \gamma} \leq \prod_{p \in B}p^\gamma \leq \prod_{1 \leq
    i \neq j < t}(e_i - e_j) \leq d^{t(t-1)} \leq D^{T(T-1)}.
  \] 
  Solving the inequality for $\#B$ gives us
  \[
  \#B \leq \frac{ T(T-1)\ln(D) }{ \ln( \lambda )\gamma}.
  \]
  The total number of primes in $[ \lambda, 2\lambda]$ is greater than
  $3\lambda/(5\ln(\lambda))$ for $\lambda \geq 21$ by Corollary 3 to
  Theorem 2 of \citep{RosSch62}.  From our definition of $\lambda$ we have
  \[
  \frac{3\lambda}{5\ln(\lambda)} > \frac{2T(T-1)\ln(D)}{\ln(\lambda)\gamma} \geq 2\# B,
  \]
  completing the proof. \qed
\end{proof}

\subsubsection{Relating the sparsity of $g \bmod (z^p-1)$ with
  $\COL{p}$}
\quad\smallskip

Suppose we choose $\lambda$ according to Lemma \ref{lem:lambda}, and
make $k$ probes to compute $g \bmod (z^{p_1}-1), \dots, g \bmod
(z^{p_k}-1)$.  One of the primes $p_i$ will yield an image with fewer
than $\gamma$ colliding terms (i.e. $\COL{p_i} < \gamma$) with
probability at least $1-2^{-k}$.  Unfortunately, we do not know which
prime $p$ maximizes $\COL{p}$.  A good heuristic might be to select
the prime $p$ for which $g \bmod (z^{p}-1)$ has maximally many terms.
However, this does not necessarily minimize $\COL{p}$.  Consider the
following example.

\begin{Exam}
  Let
  \begin{equation*}
    g = 1 + z + z^4 - 2z^{13}.
  \end{equation*}
  We have
  \[
  g \bmod (z^2-1) = 2 - z, ~~\mbox{and}~~~ g \bmod (z^3-1) = 1.
  \]
  While $g \bmod (z^2-1)$ has more terms than $g \bmod (z^3-1)$,
  we see that $\COL{2} = 4$ is larger than $\COL{3} = 3$.
\end{Exam}

While we cannot determine the prime $p$ for which $g \bmod (z^p-1)$
has maximally many non-colliding terms, we show that choosing the
prime $p$ which maximizes the number of terms in $g \bmod (z^p-1)$ is,
in fact, a reasonable strategy.

We would like to find a precise relationship between $\COL{p}$, the
number of terms of $g$ that collide in the image $g \bmod (z^p-1)$,
and the sparsity $s$ of $g \bmod (z^p-1)$.

\begin{Lem}\label{lem:chooseMostTerms}
  Suppose that $g$ has $t$ terms, and $g \bmod (z^p-1)$ has $s \leq t$
  terms.  Then $t-s \leq \COL{p} \leq 2(t-s)$.
\end{Lem}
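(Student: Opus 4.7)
The plan is to partition the terms of $g$ according to their exponent classes modulo $p$ and track what survives after reduction.

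First I would group the terms $c_1z^{e_1},\ldots,c_tz^{e_t}$ into equivalence classes under the relation $e_i\equiv e_j \pmod p$. Call a class a \emph{singleton class} if it contains exactly one term of $g$, and a \emph{collision class} if it contains at least two. Let $m$ be the number of singleton classes and let $n_1,\ldots,n_k\geq 2$ be the sizes of the collision classes. Then by definition $t=m+\sum_{j=1}^k n_j$ and $\COL{p}=\sum_{j=1}^k n_j$.

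Next I would count the terms of $g\bmod(z^p-1)$. Each singleton class contributes exactly one nonzero term to the reduced image (its coefficient is nonzero, and its reduced exponent is unique to that class). Each collision class contributes either $0$ or $1$ terms, depending on whether the sum of its coefficients vanishes in $\R$. Let $k'\in[0,k]$ denote the number of collision classes whose coefficient-sum is nonzero, so that $s=m+k'$. Combining with the formulas for $t$ and $\COL{p}$ above gives the key identity
\[
t-s \;=\; \sum_{j=1}^k n_j - k' \;=\; \COL{p}-k'.
\]

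From this, the lower bound $\COL{p}\geq t-s$ is immediate from $k'\geq 0$. For the upper bound, I would use that every $n_j\geq 2$, so $\COL{p}=\sum_j n_j \geq 2k \geq 2k'$, which rearranges via the identity above to $\COL{p}\leq 2(t-s)$. There is no real obstacle here; the only subtlety worth stating carefully is that singleton classes can never ``cancel'' while collision classes can, so one must be careful not to confuse ``surviving class'' with ``non-colliding term'' when writing out the count for $s$.
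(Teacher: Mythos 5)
Your proof is correct and follows essentially the same approach as the paper: both decompose $g$ into its non-colliding part and its colliding part (you refine the latter into equivalence classes of sizes $n_j\geq 2$), and both use that each collision class contributes at most one term to the reduced image while containing at least two terms. Your explicit identity $t-s=\COL{p}-k'$ makes the bookkeeping a bit cleaner than the paper's contradiction argument, but the underlying idea is the same.
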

\begin{proof}
  To prove the lower bound, note that $t-\COL{p}$ terms of $g$ will
  not collide modulo $z^p-1$, and so $g \bmod (z^p-1)$ has sparsity $s$ at
  least $t-\COL{p}$.

  We now prove the upper bound.  Towards a contradiction, suppose that
  $\COL{p} > 2(t-s)$.  There are $\COL{p}$ terms of $g$ that collide modulo
  $z^p-1$.  Let $h$ be the $\COL{p}$-sparse polynomial
  comprised of those terms of $g$.  
  As each term of $h$ collides with at
  least one other term of $h$, $h \bmod (z^p-1)$ has sparsity at most
  $\COL{p}/2$.  Since none of the terms of $g-h$ collide modulo
  $z^p-1$, $(g-h) \bmod (z^p-1)$ has sparsity exactly $t-\COL{p}$.  It
  follows that $g \bmod (z^p-1)$ has sparsity at most
  $t-\COL{p}+\COL{p}/2=t-\COL{p}/2$.  That is, $s \leq t-\COL{p}/2$,
  and so $\COL{p} \leq 2(t-s)$.  \qed
\end{proof}

\begin{Cor}\label{cor:choose}
  Suppose $g$ has sparsity $t$, $g \bmod (z^q-1)$ has sparsity $s_q$, and 
  $g \bmod (z^p-1)$
  has sparsity $s_p \geq s_q$.  Then $\COL{p} \leq 2\COL{q}$.
\end{Cor}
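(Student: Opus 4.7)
The plan is to deduce the corollary directly from Lemma \ref{lem:chooseMostTerms} by chaining the upper bound at $p$ with the lower bound at $q$. The monotonicity hypothesis $s_p \geq s_q$ will do the rest of the work.

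First, I would apply the upper bound half of Lemma \ref{lem:chooseMostTerms} to the prime $p$: since $g \bmod (z^p-1)$ has sparsity $s_p$ and $g$ has sparsity $t$, we get
\[
\COL{p} \leq 2(t - s_p).
\]
Next, I would apply the lower bound half of the same lemma, this time to the prime $q$, giving
\[
t - s_q \leq \COL{q}.
\]

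The final step is to use $s_p \geq s_q$, which is equivalent to $t - s_p \leq t - s_q$. Substituting into the first inequality,
\[
\COL{p} \leq 2(t - s_p) \leq 2(t - s_q) \leq 2\COL{q},
\]
which is the desired conclusion. There is no real obstacle here; the only subtlety worth checking is that Lemma \ref{lem:chooseMostTerms} requires the image sparsity to be at most $t$, and this is automatic since reducing exponents modulo $p$ (or $q$) cannot increase the number of terms. The corollary is essentially a one-line consequence of the two-sided bound already established.
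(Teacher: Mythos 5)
Your proof is correct and follows exactly the same chain of inequalities as the paper's: apply the upper bound of Lemma~\ref{lem:chooseMostTerms} at $p$, use $s_p \geq s_q$, then apply the lower bound at $q$. Nothing to add.
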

\begin{proof}\quad\vspace*{-\baselineskip}
  \[
  \begin{array}{rll}
    \COL{p} & \leq 2(t-s_p)~~ & \emph{by the second inequality 
      of Lemma \ref{lem:chooseMostTerms},} \\
    & \leq 2(t-s_q) & \emph{since $s_p \geq s_q$,} \\
    & \leq 2\COL{q} & \emph{by the first inequality of Lemma
      \ref{lem:chooseMostTerms}.} \hspace*{20pt}\qed
  \end{array}
  \]
\end{proof}

Suppose then that we have computed $g \bmod (z^p-1)$, for $p$
belonging to some set of primes $S$, and the minimum value of
$\COL{p}$, $p \in S$, is less than $\gamma$.  Then a prime $p^* \in S$
for which $g \bmod (z^{p^*}-1)$ has maximally many terms satisfies
$\COL{p^*} < 2\gamma$.  We will call such a prime $p^*$ an \emph{ok
  prime}.

We then choose $\gamma = wT$ for an appropriate proportion $w \in
(0,1)$.  We show that setting $w = 3/16$ allows that each recursive
call reduces the sparsity of the subsequent polynomial by at least
half.  This would make $\lambda = \lceil \tfrac{10}{3w}(T-1)\ln(D)
\rceil = \lceil \tfrac{160}{9}(T-1)\ln(D) \rceil$.  As per Lemma
\ref{lem:lambda}, in order to guarantee with probability
$1-\varepsilon$ that we have come across a prime $p$ such that
$\COL{p} \leq \gamma$, we will need to perform $\lceil \log
1/\varepsilon \rceil$ probes of degree $\mathcal{O}(T\log D)$.
Procedure {\tt\ref{proc:FindOkPrime}} summarizes how we find an ok
prime.

\begin{procedure}
\caption{FindOkPrime($\SLP_f, f^*, T_g, D, \varepsilon$)}
\label{proc:FindOkPrime}
\KwIn{\begin{itemize}[noitemsep,nolistsep]
\item $\SLP_f$, a straight-line program that computes a polynomial $f$
\item $f^*$, a current approximation to $f$
\item $T_g$ and $D$, bounds on the sparsity and degree of $g=f-f^*$ 
respectively
\item $\varepsilon$, a bound on the probability of failure
\end{itemize}}
\KwOut{With probability at least $1-\varepsilon$, we return an ``ok prime'' 
for $g=f-f^*$}\vspace{\baselineskip}

$\lambda \longleftarrow \max\left( 21, \left\lceil \tfrac{160}{9}(T_g-1)\ln D 
\right\rceil\right)$\\
$\left({\tt max\_sparsity}, p\right) \longleftarrow (0,0)$\\

\For{$i \longleftarrow 1$ \KwTo $\lceil \log 1/\varepsilon \rceil$ }{
	$p' \longleftarrow $ a random prime in $[ \lambda, 2\lambda]$\\
	\If{ \# of terms of $(f - f^*) \bmod (z^{p'}-1) \geq {\tt max\_sparsity}$}{
		${\tt max\_sparsity} \longleftarrow $ \# of terms of $(f - f^*) \bmod 
		(z^{p'}-1)$\\
		$p \longleftarrow p'$
	}
}
\Return{$p$}
\end{procedure}

A practical application would probably choose random primes by selecting random integer values in $[\lambda, 2\lambda]$ and then applying probabilistic primality testing.  In order to ensure deterministic worst-case run-time, we could pick random primes in the range $[\lambda, 2\lambda]$ by using a sieve method to pre-compute all the primes up to $2\lambda$.


\subsection{Generating an approximation $f^{**}$ of
  $g$}\label{sec:fstarstar}

We suppose now that we have, with probability at least $1 -
\varepsilon$, an ok prime $p$; i.e., a prime $p$ such that $\COL{p}
\leq 2wT$ for a suitable proportion $w$.  We now use this ok prime $p$
to construct a polynomial $f^{**}$ containing {\em most} of the terms
of $g=f-f^*$.

For a set of coprime moduli $\mathcal{Q} = \{q_1, \dots, q_k\}$
satisfying $\prod_{i=1}^k q_i > D$, we will compute $g \bmod
(z^{pq_i}-1)$ for $1 \leq i \leq k$.  Here we make no requirement that
the $q_i$ be prime.  We merely require that the $q_i$ are pairwise
co-prime.

We choose the $q_i$ as follows: denoting the $i\th$ prime by $p_i$, we
set $q_i = p_i^{\lfloor \log_{p_i} x \rfloor}$, for an appropriate
choice of $x$.  That is, we let $q_i$ be the greatest power of the
$i\th$ prime that is no more than $x$.  For $p_i \leq x$, we have $q_i
\geq x/p_i$ and $q_i \geq p_i$.  Either $x/p_i$ or $p_i$ is at least
$\sqrt{x}$, and so $q_i\ge \sqrt{x}$ as well.

By Corollary 1 of Theorem 2 in \cite{RosSch62}, there are more than
$x/\ln x$ primes less than or equal to $x$ for $x \geq 17$.  Therefore
\begin{equation*}
  \prod_{p_i \leq x}q_i \geq \left(\sqrt{x}\right)^{x/\ln x}.
\end{equation*}
As we want this product to exceed $D$, it suffices that
\begin{align*}
  \ln D &< \ln\left( \left(\sqrt{x}\right) ^{x/\ln x}\right) = x/2.
\end{align*}
Thus, if we choose $x \geq \max( 2\ln(D), 17)$ and $k = \lceil x/\ln x
\rceil$, then $\prod_{i=1}^k q_i$ will exceed $D$.  This means $q_i
\in \O(\log D)$ and $pq_i \in \O( T\log^2 D)$.  The number of probes
in this step is $k \in \O( \log(D)/\log\log(D))$.  Since we will use
the same set of moduli $\mathcal{Q} = \{q_1, \dots, q_k\}$ in every
recursive call, we can pre-compute $\mathcal{Q}$ prior to the first
recursive call.


We now describe how to use the images $g \bmod (z^{pq_i}-1)$ to
construct a polynomial $f^{**}$ such that $g-f^{**}$ is at most
$T/2$-sparse.

If $cz^e$ is a term of $g$ that does not collide with any other terms
modulo $z^p-1$, then it certainly will not collide with other terms
modulo $z^{pq}-1$ for any natural number $q$.  Similarly, if
$c^*z^{{e^*} \bmod p}$ appears in $g \bmod (z^p-1)$ and there exists a
unique term $c^*z^{{e^*} \bmod pq_i}$ appearing in $g \bmod
(z^{pq_i}-1)$ for $i=1, 2, \dots, k$, then $c^*z^{e^*}$ is potentially
a term of $g$.  Note that $c^*z^{e^*}$ is not \emph{necessarily} a
term of $g$: consider the following example.

\begin{Exam}
  Let
  \begin{equation*}
    g(z) = 1+z+z^2+z^3 + z^{11+4} - z^{14\cdot 11+4} - z^{15\cdot 11 + 4},
  \end{equation*}
  with hard sparsity bound $T_g=7$ and degree bound $D=170$ and let
  $p=11$.  We have
  \begin{equation*}
    g(z) \bmod (z^{11}-1) = 1+z+z^2+z^3-z^4.
  \end{equation*}
  As $\deg(g)=170<2\cdot 3\cdot 5\cdot 7=210$, it suffices to make the
  probes $g \bmod z^{11q}-1$ for $q=2,3,5,7$.  Probing our remainder
  black-box polynomial, we have
  \begin{align*}
    g \bmod (z^{22}-1) &= 1+z+z^2+z^3-z^{15},\\
    g \bmod (z^{33}-1) &= 1+z+z^2+z^3-z^{26},\\
    g \bmod (z^{55}-1) &= 1+z+z^2+z^3-z^{48},\\
    g \bmod (z^{77}-1) &= 1+z+z^2+z^3-z^{15}.
  \end{align*}
  In each of the images $g \bmod z^{pq}-1$, there is a unique term
  whose degree is congruent to one of $e=0,1,2,3,4$ modulo $p$.  Four
  of these terms correspond to the terms $1,z,z^2,z^3$ appearing in
  $g$.  Whereas the remaining term has degree $e$ satisfying $e = 1
  \bmod 2$, $e = 2 \bmod 3$, $e = 3 \bmod 5$, and $e = 1 \bmod 7$.  By
  Chinese remaindering on the exponents, this gives a term $-z^{113}$
  not appearing in $g$.
\end{Exam}

\begin{procedure}
  \caption{ConstructApproximation($\SLP_f, f^*, D, p,
    \mathcal{Q}$)}\label{proc:ConstructApproximation}
  \KwIn{
    \begin{itemize}[noitemsep,nolistsep]
    \item $\SLP_f$, a straight-line program that computes a polynomial
      $f$
    \item $f^*$, a current approximation to $f$
    \item $D$ a bound on the degree of $g=f-f^*$
    \item $p$, an ok prime for $g$ (with high probability)
    \item $\mathcal{Q}$, a set of co-prime moduli whose product
      exceeds $D$
    \end{itemize}
  } \KwOut{A polynomial $f^{**}$ such that, if $p$ is an ok prime,
    $g-f^{**}$ has sparsity at most $\lfloor T_g/2
    \rfloor$, where $g$ has at most $T_g$ terms.}\vspace{\baselineskip}

  \tcp{Collect images of $g$}
  $\mathcal{E} \longleftarrow $ set of exponents of terms in $(f-f^*) \bmod (z^p-1)$\\
  \For{$q \in \mathcal{Q}$}{
    $h \longleftarrow (f-f^*) \bmod (z^{pq}-1)$\\
    \For{ each term $cz^e$ in $h$ }{ 
      \lIf{ $E_{(e \bmod p), q}$ is already initialized}{ $\mathcal{E}
        \longleftarrow \mathcal{E}/\{e \bmod p\}$} 
      \lElse{ $E_{(e \bmod p), q} \longleftarrow e \bmod q$ } 
    } } \vspace{\baselineskip}

  \tcp{Construct terms of new approximation of $g$, $f^{**}$}
  $f^{**} \longleftarrow 0$\\
  \For{ $e_p \in \mathcal{E}$ }{
    $e \longleftarrow $ least nonnegative solution to $\{e = E_{e_p, q} \bmod q $ $|$ $q \in \mathcal{Q}\}$\\
    $c \longleftarrow $ coefficient of $z^{e_p}$ term in $(f-f^*) \bmod (z^p-1)$\\
    \lIf{$e \leq D$}{ $f^{**} \longleftarrow f^{**} + cz^e$ } }
  \Return{ $f^{**}$ }

\end{procedure}

\vspace*{-3pt}
\begin{Defn}
  Let $c^*z^{e^*}$, $e^* \leq D$ be a monomial such that $c^*z^{e^*
    \bmod p}$ appears in $g \bmod z^p-1$, and $c^*z^{e^* \bmod pq_i}$
  is the unique term of degree congruent to $e^*$ modulo $p$ appearing
  in $g \bmod (z^{pq_i}-1)$ for each modulus $q_i$.  If $c^*z^{e^*}$
  is not a term of $g$ we call it a {\em deceptive term}.
\end{Defn}

Fortunately, we can detect a collision comprised of only two terms.
Namely, if $c_1z^{e_1} + c_2z^{e_2}$ collide, there will exist a $q_i$
such that $q_i \nmid (e_1-e_2)$.  That is, $g \bmod (z^{pq_i}-1)$ will
have two terms whose degree is congruent to $e_1 \bmod p$.  Once we
observe that, we know the term $(c_1+c_2)z^{e_1 \bmod p}$ appearing in
$g \bmod (z^p-1)$ was not a distinct term, and we can ignore exponents
of the congruence class $e_1 \bmod p$ in subsequent images $g \bmod
(z^{pq_j}-1)$.

Thus, supposing $g \bmod (z^p-1)$ has at most $2\gamma$ colliding
terms and at least $t-2\gamma$ non-colliding terms, $f^{**}$ will have
the $t-2\gamma$ non-colliding terms of $g$, plus potentially an
additional $\tfrac{2}{3}\gamma$ deceptive terms produced by the
colliding terms of $g$.  In any case, $g-f^{**}$ has sparsity at most
$\tfrac{8}{3}\gamma$.  Choosing $\gamma = \tfrac{3}{16}T_g$ guarantees
that $g-f^{**}$ has sparsity at most $T_g/2$.  This would make
$\lambda = \lceil \tfrac{160}{9}(T_g-1)\ln(D) \rceil$.

Procedure {\tt \ref{proc:ConstructApproximation}} gives a pseudocode
description of how we construct $f^{**}$.

If we find a prospective term in our new approximation $f^{**}$ has
degree greater than $D$, then we know that term must have been a
deceptive term and discard it.  There are other obvious things we can
do to recognize deceptive terms which we exclude here.  For instance,
we should check that all terms from images modulo $z^{pq}-1$ whose
degrees agree modulo $p$ share the same coefficient.


\subsection{Recursively interpolating $f-f^*$}

\begin{procedure}
\caption{Interpolate($\SLP_f, T, D, \mu$) }\label{proc:Interpolate}
\KwIn{
	\begin{itemize}[noitemsep,nolistsep]
	\item $\SLP_f$, a straight-line program that computes a polynomial $f$
	\item $T$ and $D$, bounds on the sparsity and degree of $f$, respectively
	\item $\mu$, an upper bound on the probability of failure
	\end{itemize}
}
\KwOut{
	With probability at least $1-\mu$, we return $f$
}\vspace{\baselineskip}

$x \longleftarrow \max( 2\ln(D), 17)$\\
$\mathcal{Q} \longleftarrow \{ p^{\lfloor \log_p x \rfloor} : p \text{ is prime}, p \leq x \}$\\
\Return{ {\tt\ref{proc:InterpolateRecurse}$(\SLP_f, 0, T, D, \mathcal{Q}, \mu/(\log T + 1) )$ }}
\end{procedure}

\begin{procedure}
\caption{ InterpolateRecurse($\SLP_f, f^*, T_g, D, \mathcal{Q}, \varepsilon$ ) }\label{proc:InterpolateRecurse}
\KwIn{
	\begin{itemize}[noitemsep,nolistsep]
	\item $\SLP_f$, a straight-line program that computes a polynomial $f$
	\item $f^*$, a current approximation to $f$
	\item $T_g$ and $D$, bounds on the sparsity and degree of $g=f-f^*$, respectively
	\item $\mathcal{Q}$, a set of coprime moduli whose product is at least $D$
	\item $\varepsilon$, a bound on the probability of failure at one recursive step
	\end{itemize}
}
\KwOut{With probability at least $1 -\mu$, the algorithm outputs $f$}\vspace{\baselineskip}

\lIf{ $T_g = 0$ }{ \Return{ $f^*$ } } 

$p \longleftarrow {\tt\ref{proc:FindOkPrime}}(\SLP_f, f^*, T_g, D, \varepsilon)$\\
$f^{**} \longleftarrow {\tt\ref{proc:ConstructApproximation}}(\SLP_f, f^*, D, 
p, \mathcal{Q})$\\

\Return{ {\tt\ref{proc:InterpolateRecurse}}$(\SLP_f, f^*+f^{**}, \lfloor T_g/2 \rfloor, D, \mathcal{Q}, \varepsilon ) $}
\end{procedure}

Once we have constructed $f^{**}$, we refine our approximation $f^*$
by adding $f^{**}$ to it, giving us a new difference $g=f-f^*$
containing at most half the terms of the previous polynomial $g$.  We
recursively interpolate our new polynomial $g$.  With an updated
sparsity bound $\lfloor T_g/2 \rfloor$, we update the values of
$\gamma$ and $\lambda$ and perform the steps of Sections \ref{sec:ok}
and \ref{sec:fstarstar}.  We recurse in this fashion $\log T$
times.  Thus, the total number of probes becomes
\[
\O\left( \log T ( \tfrac{\log D}{\log\log D} + \log(1/\varepsilon)) \right),
\]
of degree at most $\O( T\log^2 D)$.

Note now that in order for this method to work we need that, at every
recursive call, we in fact get a good prime, otherwise our sparsity bound
on the subsequent difference of polynomials could be incorrect.  At
every stage we succeed with probability $1-\varepsilon$, thus the
probability of failure is $1-(1-\varepsilon)^{\lceil \log T \rceil}$.
This is less than $\lceil \log T \rceil \varepsilon$.  If we want to
succeed with probability $\mu$, then we can choose $\varepsilon =
\tfrac{ \mu }{ \log T + 1} \in \O( \tfrac{ \mu }{ \log T
})$.  

{\tt\ref{proc:Interpolate}} pre-computes our set of moduli $\mathcal{Q}$, then makes the first recursive call to {\tt\ref{proc:InterpolateRecurse}}, which subsequently calls itself.  


\subsection{A cost analysis}

We analyse the cost of our algorithm, thereby proving Theorems
\ref{thm:cost} and \ref{thm:probe}.

\subsubsection{Pre-computation.}

Using the wheel sieve \citep{Pri82}, we can compute the set of primes
up to $x \in \O( \log D)$ in $\softO( \log D )$ bit operations. From
this set of primes we obtain $\Q$ by computing $p^{\lfloor
  \log_p x \rfloor}$ for $p \leq \sqrt{x}$ by way of
squaring-and-multiplying. For each such prime, this costs $\softO( \log x
)$ bit operations, so the total cost of computing $\Q$ is $\softO(
\log D)$.

\subsubsection{Finding ok primes. }

In one recursive call, we will look at some $\log 1/\varepsilon =
\O(\log 1/\mu \log\log T)$ primes in the range $[ \lambda, 2\lambda]$
in order to find an ok prime. Any practical implementation would
select such primes by using probabilistic primality testing on random
integer values in the range $[ \lambda, 2\lambda]$; however, the
probabilistic analysis of such an approach, in the context of our
interpolation algorithm, becomes somewhat ungainly. We merely note
here that we could instead pre-compute primes up to our initial value
of $\lambda \in \O(T\log D)$ in $\softO( T\log
D)$ bit operations by way of the wheel sieve.

Each prime $p$ is of order $T\log D$, and so, per our discussion in
Section \ref{sec:intro}, each probe costs $\softO(LT\log D)$ ring
operations and similarly many bit operations. Considering the 
$\O(\log T)$ recursive calls, this totals $\softO( LT\log D\log 1/\mu )$
ring and bit operations.

\subsubsection{Constructing the new approximation $f^{**}$. } 

Constructing $f^{**}$ requires $\softO( \log D )$ probes of degree
$\softO( T\log^2 D )$. This costs $\softO( LT \log^3 D)$ ring and bit
operations. Performing these probes at each $\O( \log T)$ recursive
call introduces an additional factor of $\log T$, which does not
affect the ``soft-Oh'' complexity. This step dominates the cost of the
algorithm.

Building a term $cz^e$ of $f^{**}$ amounts to solving a set of congruences.  
By Theorem 5.8 of \cite{GatGer03}, this requires some $\O( \log^2 D )$ word 
operations. Thus the total cost of Chinese remaindering to construct $f^{**}$ 
becomes $\O( T\log^2 D)$.  Again, the additional $\log T$ factor due to the 
recursive calls does not affect the stated complexity.


\section{Conclusions}

We have presented a recursive algorithm for interpolating a polynomial
$f$ given by a straight-line program, using probes of smaller degree
than in previously known methods. We achieve this by looking for
``ok'' primes which separate most of the terms of $f$, as opposed to
``good'' primes which separate all of the terms of $f$. As is seen in
Table \ref{tab:comp}, our algorithm is an improvement over previous
algorithms for moderate values of $T$.

This work suggests a number of problems for future work. We believe
our algorithms have the potential for good numerical stability, and
could improve on \citeauthor{GieRoc11}'s \citeyearpar{GieRoc11} work
on numerical interpolation of sparse complex polynomials, hopefully
capitalizing on the lower degree probes. Our Monte Carlo algorithms
are now more efficient than the best known algorithms for polynomial
identity testing, and hence these cannot be used to make them error
free. We would ideally like to expedite polynomial identity testing of
straight-line programs, the best known methods currently due to
\cite{BlaHar09}. Finally, we believe there is still room for
improvement in sparse interpolation algorithms. The vector of
exponents of $f$ comprises some $T\log D$ bits. Assuming no
collisions, a degree-$\ell$ probe gives us some $t\log \ell$ bits of
information about these exponents. One might hope, aside from some
seemingly rare degenerate cases, that $\log D$ probes of degree $T\log
D$ should be sufficient to interpolate $f$.


\section{Acknowledgements}

We would like to thank Reinhold Burger and Colton Pauderis for their feedback 
on a draft of this 
paper.

\renewcommand\bibsection{\section*{References}}

\bibliography{rsi}

\end{document}